\newif\ifconf
\conftrue 
\conffalse

  \documentclass[letterpaper,11pt]{article}

  \usepackage{typearea}
  \paperwidth 8.5in \paperheight 11in
  \typearea{15}

  \usepackage[compact]{titlesec}

\usepackage{theorem,latexsym,graphicx}
\usepackage{amsmath,amssymb,enumerate}
\usepackage{xspace}

\usepackage{bm}
\usepackage{mathrsfs}
\usepackage{ifpdf}
\usepackage{shadow,shadethm,color}
\usepackage{algorithm}
\usepackage[noend]{algorithmic}
\usepackage{subfigure}
\usepackage{verbatim}
\usepackage{paralist}

\allowdisplaybreaks

\ifconf
\newcommand{\lref}[2][]{{#1~\ref{#2}}}
\else
\definecolor{Darkblue}{rgb}{0,0,0.4}
\definecolor{Brown}{cmyk}{0,0.81,1.,0.60}
\definecolor{Purple}{cmyk}{0.45,0.86,0,0}
\newcommand{\mydriver}{hypertex}
\ifpdf
 \renewcommand{\mydriver}{pdftex}
\fi
\usepackage[breaklinks,\mydriver]{hyperref}
\hypersetup{colorlinks=true,
            citebordercolor={.6 .6 .6},linkbordercolor={.6 .6 .6},%
citecolor=Darkblue,urlcolor=black,linkcolor=Darkblue,pagecolor=black}
\newcommand{\lref}[2][]{\hyperref[#2]{#1~\ref*{#2}}}
\fi

  \makeatletter
  \setlength{\parindent}{0pt}
  \addtolength{\partopsep}{-2mm}
  \setlength{\parskip}{5pt plus 1pt}
  \addtolength{\theorempreskipamount}{-1mm}
  \addtolength{\theorempostskipamount}{-1mm}
  \addtolength{\abovedisplayskip}{-3mm}
  \addtolength{\textheight}{45pt}
  \addtolength{\footskip}{-20pt}
  \makeatother

\newtheorem{theorem}{Theorem}[section]

\newtheorem{proposition}[theorem]{Proposition}

\newtheorem{lemma}[theorem]{Lemma}
\newshadetheorem{lemmashaded}[theorem]{Lemma}

\newtheorem{corollary}[theorem]{Corollary}

\numberwithin{algorithm}{section}

\newenvironment{proof}{

\noindent{\bf Proof:}}
{\hfill$\blacksquare$

}

\newcommand{\junk}[1]{}
\newcommand{\ignore}[1]{}

\newcommand{\R}[0]{{\ensuremath{\mathbb{R}}}}


\newcommand{\argmin}{\operatorname{argmin}}
\newcommand{\argmax}{\operatorname{argmax}}

\newcommand{\sse}{\subseteq}
\newcommand{\B}{{\mathbf{B}}}
\newcommand{\cE}{{\mathcal{E}}}

\newcommand{\eps}{\varepsilon}

\newcommand{\Ex}{{\rm E}}

\newcommand{\expectation}[3][0]{%
  \ifcase#1
  E( #2 \mid #3 )
  \or E \bigl( #2 \bigm\vert #3 \bigr)
  \or E \Bigl( #2 \Bigm\vert #3 \Bigr)
  \or E \biggl( #2 \biggm\vert #3 \biggr)
  \or E \Biggl( #2 \Biggm\vert #3 \Biggr)
  \else
  E \left( #2  \;\middle\vert\; #3 \right)
  \fi}

\newcounter{note}[section]



\newcommand{\qedsymb}{\hfill{\rule{2mm}{2mm}}}
\renewenvironment{proof}{\begin{trivlist} \item[\hspace{\labelsep}{\bf
\noindent Proof.\/}] }{\qedsymb\end{trivlist}}%

\newcommand{\initOneLiners}{%
    \setlength{\itemsep}{0pt}
    \setlength{\parsep }{0pt}
    \setlength{\topsep }{0pt}
}
\newenvironment{OneLiners}[1][\ensuremath{\bullet}]
    {\begin{list}
        {#1}
        {\initOneLiners}}
    {\end{list}}

\newcommand{\squishlist}{
 \begin{list}{$\bullet$}
  { \setlength{\itemsep}{0pt}
     \setlength{\parsep}{3pt}
     \setlength{\topsep}{3pt}
     \setlength{\partopsep}{0pt}
     \setlength{\leftmargin}{1.5em}
     \setlength{\labelwidth}{1em}
     \setlength{\labelsep}{0.5em} } }

\newcommand{\squishend}{
  \end{list}  }

\newcommand{\thresh}{\mathscr{T}}
\newcommand{\good}{\ensuremath{\cE}\xspace}
\newcommand{\hatr}{\widehat{\rho}}
\newcommand{\rhothresh}{\rho^c}


\begin{document}

\title{Random Rates for $0$-Extension and Low-Diameter Decompositions}
\author{
Anupam Gupta\thanks{Department of Computer Science, Carnegie Mellon
  University, Pittsburgh PA 15213, and Microsoft Research SVC, Mountain
  View, CA 94043. Research was partly supported by
    NSF awards CCF-0964474 and CCF-1016799, and by a grant from the
    CMU-Microsoft Center for Computational Thinking.}
\and
Kunal Talwar\thanks{Microsoft Research SVC, Mountain View, CA 94043.}
}
\date{}

\maketitle

\begin{abstract}
  Consider the problem of partitioning an arbitrary metric space into
  pieces of diameter at most $\Delta$, such every pair of points is
  separated with relatively low probability. We propose a rate-based
  algorithm inspired by multiplicatively-weighted Voronoi diagrams, and
  prove it has optimal trade-offs. This also gives us another algorithm
  for the $0$-extension problem.
\end{abstract}

\section{Introduction}
\label{sec:introduction}

We consider partitioning problems of the following form: given a metric
$(V,d)$, how should we decompose it into ``small'' pieces so as to cut
``few'' edges. There are many variants of this general form, and in this
note we consider two of them: \emph{terminal
  partitioning/$0$-extension} and \emph{low-diameter decompositions}.

In the \emph{low-diameter decomposition problem}, we are given a metric
$(V,d)$ and a diameter bound $\Delta$, and the goal is to (randomly)
partition the set $V$ into pieces each of diameter at most $\Delta$ so
that for any pair $x,y \in V$,
\[ \Pr[ \, x,y \text{ separated} \,] \leq \beta \cdot
\frac{d(x,y)}{\Delta}. \] It is known that $\beta = O(\log n)$ is the
possible for any $n$-point metric, and there are metrics for which no
better is possible. Such decompositions have been widely studied, e.g.,
works by Awerbuch~\cite{Awer85}, Linial and Saks~\cite{LinialSaks93},
Leighton and Rao~\cite{LR88}, Garg, Vazirani, and
Yannakakis~\cite{GVY96}, and Seymour~\cite{Sey93} studied an equivalent
deterministic version of this problem, and Bartal~\cite{Bar96},
Calinescu, Karloff, and Rabani~\cite{CKR01-zero}, Fakcharoenphol, Rao,
and Talwar~\cite{FRT03}, and Abraham, Bartal, and Neiman~\cite{ABN06}
studied randomized versions. (This is almost certainly an incomplete
list --- though some other pertinent references follow.)  Many of these
results study more nuanced parameters and give bounds that improve on
$O(\log n)$ for special cases, but we omit discussions of these for sake
of brevity.

The \emph{terminal partitioning problem} can be thought of as a
multi-scale version of low-diameter decomposition. This name is not
standard (we coin it here), but it arises in solving the
\emph{$0$-extension} problem. In terminal partitioning, instead of a
diameter bound, we are given a set $T$ of terminals, where $T \sse V$
and $|T| = k$, and we want a (random) partition $V_1, V_2, \ldots, V_k$,
such that the $i^{th}$ terminal $t_i \in V_i$, and for any $x,y \in V$,
\[ \Pr[ \, x,y, \text{ separated} \,] \leq \alpha \cdot
\frac{d(x,y)}{\min\{d(x,T), d(y,T)\}}. \] In other words, edges whose
endpoints are far away from the terminal set should be cut with smaller
probability than edges whose endpoints are close to terminals, a natural
enough requirement. Again, it is known that $O(\log k)$ is possible for
any metric~\cite{CKR01-zero}; however, this is not the best possible in this
case~\cite{FHRT03}. 

The writing of this note was prompted by two elegant recent results. The
first is a paper of Buchbinder, Naor, and Schwartz~\cite{BNS13} that studies
the multiway cut problem, which is a special case of $0$-extension. They
give a rounding based on exponential clocks. (An identical rounding was
earlier, though independently, also given by Ge et al.~\cite{GHYZ11}.)
The second is a paper of Miller, Peng, and Xu~\cite{MPX13}, who study
low-diameter decompositions and give a algorithm with $\beta = O(\log
n)$ based on exponential clocks. Their algorithm is easily
parallelizable, and it substantially improves and cleans up a previous
sub-optimal algorithm in the parallel setting due to Blelloch et
al.~\cite{BGKMPT11-tocs}.

\subsection{Our Results}

In this note we give an algorithm for the terminal partitioning problem,
which has $\alpha = O(\log k)$. This immediately gives an $O(\log k)$
approximation for the $0$-extension problem. While this ratio is not
optimal, we find the algorithm appealing due to its simplicity: for each
terminal $t \in T$, we pick a random rate $\rho_t$ from a certain
(shifted, truncated exponential) probability distribution.\footnote{The random
  variable $\rho_t \sim 1 + \text{Exp}(\ln k)$ conditioned on being at most 2; details follow in
  Section~\ref{sec:zero-extension}.}  Then for each non-terminal $v \in
V$, we assign it to the terminal
\[ \argmin_{t \in T} \left\{ \frac{d(x,t)}{\rho_t} \right\} \] breaking
ties arbitrarily. (This is very similar in spirit to the~\cite{BNS13,
  GHYZ11} geometric rounding for multiway cut simplex linear program.)

A side-effect of our algorithm for terminal partitioning is a certain
``proximity'' condition: it only assigns each vertex to ``close-by''
terminals. We show that terminal partitionings that satisfy this kind of
proximity condition also give us low-diameter decompositions, merely by
choosing an $O(\Delta)$-net of the metric as the terminal set and then
running the terminal partitioning algorithm. This immediately gives a
low-diameter decomposition with $\beta = O(\log n)$, which is best
possible. Details appear in Section~\ref{sec:ldd}.

A word about the relationship of this note to the work of Miller, Peng,
and Xu~\cite{MPX13}: in their algorithm each vertex $v \in V$ first
picks a random value $X_v \sim \text{Exp}(\ln n/\Delta)$, and say
$X_{\max} := \max_v X_v$. Their algorithm builds BFS trees at unit rate
from a set of terminals, where we start off with the terminal set being
empty, and each vertex $v$ enters the terminal set (and hence starts
building its BFS tree) at time $X_{\max} - X_v$. Each vertex is assigned
to the first BFS tree it belongs to. We can think of this as building
\emph{additively weighted Voronoi diagrams}. In contrast, we choose a
set of terminals that are fixed over time, but our BFS trees grow at
random rates --- this is more akin to \emph{multiplicatively weighted}
Voronoi diagrams. Their algorithm is parallelizable, and also gives
strong-diameter decompositions, whereas we only give weak-diameter
decompositions.
On the other hand, our algorithm is naturally scale-free and hence lends
itself more naturally to terminal partitioning and $0$-extension,
whereas the~\cite{MPX13} algorithm is scale-based and more natural for
low-diameter decompositions.

\section{The Terminal Partitioning Problem}
\label{sec:zero-extension}

\textbf{Input:} given a metric $(V,d)$ and terminals $T \sse V$, where
$n := |V|$ and $k := |T|$.

\textbf{Output:} a (random) map $f: V \to T$ such that
\begin{OneLiners}
\item[(i)] (retraction) $f(t) = t$ for all $t \in T$,
\item [(ii)](separation) for all $u,v \in V$, we have
  \begin{gather}
    \Pr[ f(u) \neq f(v) ] \leq \alpha \cdot \frac{d(u,v)}{\min(A_u, A_v)}, \label{eq:1}
  \end{gather}
  where $A_u := d(u,T)$ is the distance from $u$ to its closest terminal
  in $T$. 
\end{OneLiners}
Such a (random) map $f$ is called a \emph{terminal partitioning with
  stretch} $\alpha$.  There is an optional property that will be useful:
\begin{OneLiners}
\item[(iii)] Let $\B(x, r) := \{ y \in V \mid d(x,y) \leq r\}$ be the
  radius-$r$ ball around $x$ in the metric $(V,d)$. For $c > 0$, the map
  $f$ is \emph{$c$-proximate} if for all $u \in V$, 
  \[ \Pr [ f(u) \in \B(u, c\cdot A_u) ] = 1. \]
\end{OneLiners}
Note that if a mapping satisfies the proximity property~(iii), it also
satisfies the retraction property~(i), simply because each terminal $t$
has $A_t = 0$, hence $f(t) \in \B(t, 0) \implies f(t) = t$.

An $\alpha$-stretch algorithm for terminal partitioning immediately
implies an $\alpha$-approximation for the $0$-extension problem (which
we do not define here); for details, see the original paper of Calinescu
et al.~\cite{CKR01-zero}.

\section{An Algorithm for Terminal Partitioning}
\label{sec:algo-zeroex}

We now give the algorithm for terminal partitioning. We first define the
truncated exponential distribution. Given parameters $\lambda$ and
$\gamma > 0$, the distribution $\text{TExp}(\lambda,\gamma)$ is simply
the exponential distribution $\text{Exp}(\lambda)$ conditioned on being
at most $\gamma$. Formally it is supported on $[0,\gamma]$ and has
density at $x \in [0,\gamma]$ equal to $p(x) = Z(\lambda,\gamma) \cdot
\lambda \exp(-\lambda x)$. Here $Z(\lambda,\gamma) = (1-\exp(-\lambda
\gamma))^{-1}$ is a normalization term. Some useful properties of this
distribution, which we use in the following analysis, can be found in
Section~\ref{sec:trunc-exp}. 

\subsection{The Random-Rates Algorithm}
\label{sec:trunc-exp}

Let $K\geq 3$ be a parameter such that for every vertex $x$, $|T \cap
B(x,2A_x)| \leq K$. Clearly $K \leq \max(3,|T|) = \max(3,k)$. 

\textbf{Algorithm Random-Rates}
\begin{OneLiners}
\item[(a)] For each terminal $t$, independently set $\nu_t \sim
  \text{TExp}(\ln K,1)$. 
\item[(b)] For each terminal $t$, set its ``rate'' $\rho_t \gets 1+\nu_t$.
\item[(c)] Imagine growing ``Voronoi'' regions at rate $\rho_t$ around each
  terminal $t$ to capture vertices. Formally, define the retraction $f$ as
  \begin{gather}
    f(x) = \argmin_{t \in T} \left\{ \frac{d(x,t)}{\rho_t}  \right\} \label{eq:2}
  \end{gather}
  We break ties arbitrarily. 
\end{OneLiners}

The main theorem of this section is the following:
\begin{theorem}
  \label{th:main}
  The random map $f$ defined by Algortithm Random-Rates is a terminal
  partitioning with stretch $\alpha = O(\log K)$, and is $2$-proximate.
\end{theorem}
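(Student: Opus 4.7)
The plan is to follow the padding-lemma blueprint from the commented-out sketch, but with two modifications: $K$ replaces $k$ throughout (by the hypothesis on local density of terminals), and the deterministic truncation of $\nu_t$ removes the need for conditioning on any "good event". In particular, $2$-proximity is immediate: since $\nu_t \sim \text{TExp}(\ln K, 1)$ forces $\rho_t \in [1,2]$, the closest terminal $t_x$ gives $d(x,t_x)/\rho_{t_x} \leq A_x$, so any $\argmin$ winner in~(\ref{eq:2}) must satisfy $d(x,f(x)) \leq \rho_{f(x)} A_x \leq 2A_x$.

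The key technical step is the padding lemma: for every $u \in V$ and every $r \leq A_u/4$,
\[
\Pr[\B(u,r) \text{ is cut}] \leq O(\log K) \cdot r/A_u.
\]
I would prove this by fixing a terminal $t^\star$ and conditioning on $f(u)=t^\star$ and the rates $\{\hatr_t\}_{t \neq t^\star}$. With the threshold $\thresh_{t^\star}(x) := \min_{t \neq t^\star} d(x,t)/\hatr_t$, the event $f(u)=t^\star$ is equivalent to $\rho_{t^\star} > d(u,t^\star)/\thresh_{t^\star}(u)$, and by triangle inequality together with $r \leq A_u/4$, the entire ball goes to $t^\star$ once $\rho_{t^\star} > (d(u,t^\star)+r)/(\thresh_{t^\star}(u)-r)$. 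Thus the conditional cut probability is at most the probability that $\rho_{t^\star}$ lies in an interval of width
\[
\delta := \frac{d(u,t^\star)+r}{\thresh_{t^\star}(u)-r} - \frac{d(u,t^\star)}{\thresh_{t^\star}(u)},
\]
which, by the bounded density of $\text{TExp}(\ln K,1)$ on its support (the normalization $Z = K/(K-1) \leq 2$ for $K \geq 2$, so density is at most $2\ln K$), is $O(\log K)\,\delta$. Removing the conditioning on $t^\star$ via $\sum_{t^\star} \Pr[f(u)=t^\star] = 1$ reduces everything to showing $\Ex[\delta] = O(r/A_u)$.

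The main obstacle, and where the hypothesis on $K$ is essential, is this expectation estimate. Using the algebraic inequality $1/(\thresh_{t^\star}(u)-r) \leq (1 + 2r/\thresh_{t^\star}(u))/\thresh_{t^\star}(u)$ (valid since $\thresh_{t^\star}(u) \geq 2r$), $\Ex[\delta]$ reduces to $\Ex[1/\thresh_{t^\star}(u)]$ and $\Ex[1/\thresh_{t^\star}(u)^2]$. Writing $1/\thresh_{t^\star}(u) = \max_{t \neq t^\star} \rho_t/d(u,t)$, only terminals inside $\B(u,2A_u)$ contribute meaningfully (those outside give $\rho_t/d(u,t) \leq 1/A_u$, already attained by the nearest terminal $t_u$), so the maximum is effectively over at most $K$ truncated exponentials of rate $\ln K$. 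The classical identity that the maximum of $n$ i.i.d.\ $\text{Exp}(1)$ variables has the distribution of $\sum_{i=1}^n Y_i$ with $Y_i \sim \text{Exp}(i)$, together with the fact that truncation only reduces these moments, gives $\Ex[\max_{t \neq t^\star} \rho_t] = O(1)$ and $\Ex[\max_{t \neq t^\star} \rho_t^2] = O(1)$, so $\Ex[1/\thresh_{t^\star}(u)] = O(1/A_u)$ and $\Ex[1/\thresh_{t^\star}(u)^2] = O(1/A_u^2)$. Combined with the bound $d(u,t^\star) \leq 2A_u$ from $2$-proximity, this yields $\Ex[\delta] = O(r/A_u)$ and completes the padding lemma.

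Finally, the stretch bound~(\ref{eq:1}) follows from the padding lemma by the standard argument: for $d(u,v) \leq A_u/4$, apply the lemma with $r = d(u,v)$; otherwise, $O(\log K) \cdot d(u,v)/A_u \geq 1$ for a large enough hidden constant, so the bound is trivially satisfied. Symmetry in $u$ and $v$ then replaces $A_u$ with $\min(A_u, A_v)$.
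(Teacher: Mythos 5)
Your proof of $2$-proximity and the final reduction from padding to stretch are both correct and match the paper. But the core of the padding lemma has a genuine gap: you claim that ``the deterministic truncation of $\nu_t$ removes the need for conditioning on any `good event','' and then bound the conditional cut probability $\Pr\big[\rho_{t^\star} \in [a,a+\delta) \mid f(u)=t^\star\big]$ by the \emph{unconditional} density bound $O(\delta \log K)$. This is wrong. Conditioning on $f(u)=t^\star$ is conditioning on $\rho_{t^\star}$ exceeding the (random) threshold $a = d(u,t^\star)/\thresh_{t^\star}(u)$, and the truncated exponential does \emph{not} inherit the memorylessness of the untruncated exponential: when $a$ is close to the truncation point $2$, the conditional density of $\rho_{t^\star}$ given $\rho_{t^\star}\geq a$ blows up (it is $\propto 1/(e^{-\lambda(a-1)} - e^{-\lambda})$, which is unbounded as $a \to 2$), so the conditional cut probability can be close to $1$ even when $\delta$ is tiny. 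Nor can you side-step this by using the unconditional bound $\Pr[\rho_{t^\star}\in[a,a+\delta)] \leq 2\delta\lambda$ and summing over $t^\star$: you would then pick up a factor of $K$ rather than a weighted sum with weights $\Pr[f(u)=t^\star]$ summing to $1$, and your statement ``removing the conditioning on $t^\star$ via $\sum_{t^\star}\Pr[f(u)=t^\star]=1$'' silently conflates the two.

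The paper's actual proof contains exactly the missing idea: a case split on the value of the threshold $\rhothresh_{t^\star}(u)$. When $\rhothresh_{t^\star}(u) \leq 2 - 1/\lambda$, the conditional denominator is bounded below by $\tfrac12 e^{-\lambda(\rhothresh_{t^\star}(u)-1)}$, so memorylessness approximately holds and the conditional cut probability is $\leq 2\delta\lambda$; these terms are summed with the capture-probability weights. When $\rhothresh_{t^\star}(u) > 2 - 1/\lambda$, the \emph{unconditional} cut probability carries an extra factor $e^{-\lambda(1-1/\lambda)} = e/K$ from Proposition~\ref{prop:texpcond}(b), which exactly cancels the factor $K$ from summing over terminals. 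Secondarily, your detour through $\Ex[1/\thresh_{t^\star}(u)]$ and $\Ex[1/\thresh_{t^\star}(u)^2]$ via moments of the maximum of exponentials is unnecessary once the rates are deterministically in $[1,2]$: the paper's Lemma~\ref{lem:threshlip} gives a \emph{deterministic} Lipschitz bound $\rhothresh_{t}(v)-\rhothresh_{t}(u)\le 12r/A_u$, so one can take $\delta = 12r/A_u$ with no expectation computation at all. That detour was needed in the older (commented-out) untruncated-exponential argument precisely because the rates were unbounded; keeping it here while also invoking truncation is the sign that the two ingredients have not been reconciled.
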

The proof appears in the next section. Moreover, the paper \cite{FHRT03}
shows that for any map that satisfies the $2$-proximity condition, the
stretch of $O(\log k)$ is best possible. In Section~\ref{sec:ldd} we
will see another proof of this optimality.

\subsection{Proof of Theorem~\ref{th:main}}
\label{sec:proof}


It is easy to see the 2-proximity. Indeed, by definition,
each $\rho_t \in [1,2]$. If $t_x$ is the terminal closest to $x$, then
the definition of $f$ ensures that 
\[ \frac{d(x,f(x))}{2} \leq \frac{d(x,f(x))}{\rho_{f(x)}} \leq
\frac{d(x,t_x)}{\rho_{t_x}} \leq d(x,t_x). \] It follows that $d(x,f(x))
\leq 2A_u$, which proves the map $f$ is 2-proximate.

To prove the stretch bound, we will show a stronger {\em padding}
property. For any $u \in V$, and any $r\geq 0$, we say that the ball
$\B(u,r)$ {\em is cut} (by the mapping $f$) if there exists $v \in
\B(u,r)$ such that $f(u) \neq f(v)$. We say that a terminal $t$ {\em
  captures} $u$ if $f(u)=t$, and that \emph{$t$ cuts $\B(u,r)$} if $t$
captures $u$ and $\B(u,r)$ is cut.

\begin{lemma}
  \label{lem:pad}                 
  For any $u \in V$ and any radius $r \leq A_u/4$,
  \begin{align}
    \Pr[\B(u,r) \mbox{ is cut} ] \leq O(\log K) \cdot
    \frac{r}{A_u}.\label{eq:padding}
  \end{align}
\end{lemma}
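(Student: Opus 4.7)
The plan is to adapt the standard one-terminal-at-a-time padding argument. Fix an arbitrary terminal $t^\star$, condition on the rates $\{\hatr_t\}_{t \neq t^\star}$ of the other terminals, and aim to show $\Pr[\B(u,r) \text{ cut} \mid f(u)=t^\star,\{\hatr_t\}] = O(\ln K \cdot r/A_u)$, a bound independent of $t^\star$. Integrating over $\{\hatr_t\}$ and then summing $\Pr[\text{cut} \mid f(u)=t^\star]\cdot\Pr[f(u)=t^\star]$ over $t^\star$ (whose weights total $1$) will yield the claim.

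Define $\thresh_{t^\star}(x) := \min_{t \neq t^\star} d(x,t)/\hatr_t$. Because $\hatr_t \in [1,2]$, we have $\thresh_{t^\star}(u) \geq A_u/2 \geq 2r$ (the second inequality uses $r \leq A_u/4$), and by the triangle inequality together with $\hatr_t \geq 1$, for every $v \in \B(u,r)$ both $d(v,t^\star) \leq d(u,t^\star)+r$ and $\thresh_{t^\star}(v) \geq \thresh_{t^\star}(u)-r > 0$. Thus $f(u) = t^\star$ iff $\rho_{t^\star} > Y := d(u,t^\star)/\thresh_{t^\star}(u)$, while $\rho_{t^\star} > X := (d(u,t^\star)+r)/(\thresh_{t^\star}(u)-r)$ is sufficient for all of $\B(u,r)$ to be captured by $t^\star$. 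Hence $\{\text{cut}\} \cap \{f(u) = t^\star\}$ is contained in $\{\rho_{t^\star} \in (Y, \min(X,2)]\}$.

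To bound the conditional probability of this interval, use that $\rho_{t^\star} - 1$ is $\text{TExp}(\ln K,1)$, whose density is bounded by $Z\ln K = O(\ln K)$ with $Z = K/(K-1) = O(1)$. A short (almost-)memoryless calculation, using this density bound together with a lower bound on $\Pr[\rho_{t^\star} > Y]$, yields
\[
\Pr[\rho_{t^\star} \in (Y,\min(X,2)] \mid \rho_{t^\star} > Y, \{\hatr_t\}] \leq O(\ln K)\cdot(\min(X,2)-Y).
\]
Algebra gives $X - Y = r(\thresh_{t^\star}(u)+d(u,t^\star))/[\thresh_{t^\star}(u)(\thresh_{t^\star}(u)-r)]$. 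Using $\thresh_{t^\star}(u) \geq A_u/2$ and noting that whenever $\Pr[f(u)=t^\star]>0$ the $2$-proximity already established above forces $d(u,t^\star) \leq 2A_u$, a short computation gives the deterministic bound $\min(X,2)-Y = O(r/A_u)$ in both cases $X \leq 2$ and $X > 2$ (in the latter case, the clip to $2$ simultaneously forces $\thresh_{t^\star}(u)$ close to $d(u,t^\star)/2$, shrinking $2-Y$ proportionally). Combining, $\Pr[\text{cut} \mid f(u) = t^\star] = O(\ln K \cdot r/A_u)$, which is independent of $t^\star$, and the lemma follows.

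The main obstacle is the memoryless-like bound for the truncated exponential in the third paragraph. For an untruncated exponential, memoryless immediately gives $1 - e^{-\lambda(X-Y)} \leq \lambda(X-Y)$. With truncation, when $Y$ is within $O(1/\ln K)$ of the truncation point $2$, the conditional density on $(Y,2]$ blows up like $1/(2-Y)$; however, the interval $(Y, \min(X,2)]$ is simultaneously of length at most $2-Y$, and a careful case split keeps the product at $O(\ln K)\cdot(\min(X,2)-Y)$, which together with the deterministic bound on the interval length suffices.
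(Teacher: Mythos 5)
Your setup is in the right spirit, and your bookkeeping up to the deterministic bound is fine: with $\thresh_{t^\star}(u) \geq d(u,t^\star)/2 \geq A_u/2$ and $r \leq A_u/4$, one indeed gets $X - Y = \frac{r(\thresh_{t^\star}(u)+d(u,t^\star))}{\thresh_{t^\star}(u)(\thresh_{t^\star}(u)-r)} \leq 12\,r/A_u$, matching the paper's $\delta$ (via its Lemma~\ref{lem:threshlip}). Your $\thresh_{t^\star}$ is also just the reciprocal reparametrization of the paper's $\rhothresh_{t^\star}$, so the geometry is identical.

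The gap is in the probabilistic step, and it is not repairable in the form you sketch. You claim a \emph{uniform} bound
\[
\Pr\bigl[\rho_{t^\star}\in(Y,\min(X,2)]\;\big\vert\;\rho_{t^\star}>Y,\{\hatr_t\}\bigr]\le O(\ln K)\bigl(\min(X,2)-Y\bigr),
\]
but this fails precisely when $2-Y$ is smaller than $1/\lambda$ (with $\lambda=\ln K$). If $X\ge 2$ and, say, $Y=2-\eps$ with $\eps<1/\lambda$, the left side equals $1$ while the right side is $O(\ln K)\cdot\eps<O(1)$, which can be made as small as you like; concretely, with $\ln K=10$ and $r/A_u=10^{-6}$ one can have $Y=2-10^{-6}$, giving conditional probability $1$ versus a claimed bound of $O(10^{-5})$. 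Your remark that ``a careful case split keeps the product at $O(\ln K)(\min(X,2)-Y)$'' is asserting exactly the thing that is false near the truncation point. Consequently the averaging $\sum_{t^\star}\Pr[\text{cut}\mid f(u)=t^\star]\Pr[f(u)=t^\star]$ with weights summing to $1$ does not close the argument, because $\Pr[\text{cut}\mid f(u)=t^\star]$ itself is not uniformly $O(\ln K\cdot r/A_u)$: conditioning on the (possibly very unlikely) event $f(u)=t^\star$ exactly renormalizes away the exponentially small factor that saves the day.

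The paper's proof handles this by a genuine dichotomy. When $\rhothresh_{t^\star}(u)\le 2-1/\lambda$, the memoryless-style bound does give a conditional probability $\le 2\delta\lambda$. When $\rhothresh_{t^\star}(u)>2-1/\lambda$, the paper abandons the conditional bound and instead uses the \emph{unconditional} bound $\Pr[\rho_{t^\star}\in[\rhothresh_{t^\star}(u),\rhothresh_{t^\star}(u)+\delta)]\le 2\delta\lambda e^{1-\lambda}=O(\delta\lambda/K)$. The $1/K$ there is then matched against the fact that at most $K$ terminals lie in $\B(u,2A_u)$ and can capture $u$, so summing over $t^\star$ gives $K\cdot O(\delta\lambda/K)=O(\delta\lambda)$. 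This use of the bound $K$ on the number of relevant terminals is essential and is entirely absent from your argument; with it, the combined estimate is $\Pr[\B(u,r)\text{ cut}]\le 2\delta\lambda+2e\delta\lambda=O(\delta\ln K)=O(\ln K\cdot r/A_u)$. I recommend reworking your third paragraph to split on whether $Y\le 2-1/\ln K$ and, in the bad case, to bound the unconditional probability and invoke $|T\cap\B(u,2A_u)|\le K$ rather than trying to preserve the conditional estimate.
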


\begin{proof}
  Fix a terminal $t^\star$. We first upper bound $\Pr[\B(u,r) \mbox{ is
    cut by } t^\star]$. Note that by the $2$-proximity condition, it
  suffices to consider $t^\star$ such that $d(u,t^\star) \in
  [A_u,2A_u]$. Condition on the rates $\hatr_t$ for all other terminals
  $t \neq t^\star$, and define the ``critical threshold'' for $x \in V$
  to be 
  \begin{gather}
    \rhothresh_{t^\star}(x) := d(x,t^\star) \cdot \argmax_{t \in T: t
      \neq t^\star}\left\{ \frac{\hatr_t}{d(x,t)}\right\} \label{eq:3}
  \end{gather}
  for all $x \in V$. Note that if $\rho_{t^\star} >
  \rhothresh_{t^\star}(x)$, then
  $f(x)=t^\star$. 
  We first prove a simple lemma.
  
  \begin{lemma}
    \label{lem:threshlip}
    Let $v \in B(u,r)$ for $r \leq A_u/4$, and let $t$ be such that
    $d(u,t)\leq 2A_u$. Then
    \begin{gather}
      \rhothresh_{t}(v) - \rhothresh_{t}(u) \leq \frac{12\,r}{A_u}. \label{eq:6}
    \end{gather}
  \end{lemma}
  \begin{proof}
    First observe that for any $t'$,
    \begin{align*}
      \frac{d(v,t)}{d(v,t')} - \frac{d(u,t)}{d(u,t')} & \leq
      \frac{d(u,t)+r}{d(u,t')-r} - \frac{d(u,t)}{d(u,t')}\\ 
      &\leq \frac{(d(u,t)+r)(1+\frac{2r}{d(u,t')})}{d(u,t')} -
      \frac{d(u,t)}{d(u,t')}\\ 
      &\leq \frac{r + (d(u,t)+r)(\frac{2r}{A_u})}{A_u}\\
      &\leq \frac{r + (\frac{5\,d(u,t)}{4})(\frac{2r}{A_u})}{A_u}\\
      &\leq \frac{r+5r}{A_u}
    \end{align*}
    Thus $\frac{d(v,t) \cdot \hatr_{t'}}{d(v,t')} -
    \frac{d(u,t) \cdot \hatr_{t'}}{d(u,t')} \leq \frac{12r}{A_u}$. The claim
    follows by definition of $\rhothresh$ and Lipschitz-ness of $\max$. 
  \end{proof}
 
  The rest of the proof is relatively simple: when the threshold is far
  from $\gamma$, the truncation has little effect, and the
  memorylessness property of the exponential suffices to show that the
  probability of cutting $\B(u,r)$, conditioned on capturing $u$ is
  small for $t^{\star}$. When the threshold is closer to $\gamma$, this
  conditional probability can be large. However, for such large
  thresholds, the unconditional probability is small enough that we can
  afford to add these probabilities over the $K$ terminals. We formalize
  this next.

  Let $\delta := 12r/A_u$ be the upper bound in~(\ref{eq:6}), and let
  $\lambda := \ln K$, the parameter for the truncated exponential. It
  follows that if $\rho_{t^\star} \geq \rhothresh_{t^\star}(u)+ \delta$,
  then $t^\star$ captures all of $\B(u,r)$. Recall that the definition
  of $t^\star$ cutting $\B(u,r)$ is that $t^\star$ must capture $u$ but
  not all of $\B(u,r)$. Hence,
  \[ \Pr\big[ t^\star \text{ cuts } \B(u,r) \big] \leq \Pr\big[ \rho_{t^\star} \in
  [\rhothresh_{t^\star}(u),\rhothresh_{t^\star}(u)+ \delta) \big] .\]
 
  Observe that if $a \leq 1 - \frac{1}{\lambda}$, then $e^{-\lambda a} -
  e^{-\lambda} = e^{-\lambda a} ( 1 - e^{\lambda(a-1)}) \geq
  \frac{e^{-\lambda a}}{2}$.  Thus if $\rhothresh_{t^\star}(u) \leq 2 -
  \frac{1}{\lambda}$, then recall that $\rho_{t^\star} - 1$ is a
  truncated exponential, and use Proposition~\ref{prop:texpcond}(c) to
  get 
  \begin{align*}
    & \Pr\bigg[t^\star \mbox{ cuts } \B(u,r) \biggm\vert (t^\star \mbox{
      captures } u) \land (\rhothresh_{t^\star}(u)  \leq 2 -
    \frac{1}{\lambda}) \bigg] \\
    &\leq \Pr\bigg[ \rho_{t^\star} \leq \rhothresh_{t^\star}(u)+ \delta
    \biggm\vert (\rho_{t^\star} \geq \rhothresh_{t^\star}(u)) \land
    (\rhothresh_{t^\star}(u) \leq 2 -
    \frac{1}{\lambda}) \bigg] \\
    &\leq \delta\, \lambda
    \cdot \frac{\exp(-\lambda\, \rhothresh_{t^\star}(u)
      )}{\exp(-\lambda\, \rhothresh_{t^\star}(u)) - \exp(-\lambda)}
    \leq 2\delta\lambda. 
  \end{align*}
  On the other hand, if $\rhothresh_{t^\star}(u) > 2-\frac{1}{\lambda}$,
  then by Proposition~\ref{prop:texpcond}(b), 
  \begin{align*}
    \Pr\bigg[t^\star \mbox{ cuts } \B(u,r) \biggm\vert
    \rhothresh_{t^\star}(u) > 2 - \frac{1}{\lambda} \bigg] 
    &=
    \Pr\bigg[ \rho_{t^\star} \in [\rhothresh_{t^\star}(u),
    \rhothresh_{t^\star}(u)+ \delta)
    \biggm\vert 
    \rhothresh_{t^\star}(u) > 2 - \frac{1}{\lambda} \bigg] 
    \\ 
    &\leq 2\delta\lambda e^{-\lambda
    (1-1/\lambda)} = 2\delta\lambda \, e^{1 - \lambda} \leq 2e\delta\lambda/K.
  \end{align*}
  It follows that
  \begin{align*}
    \Pr[t^\star \mbox{ cuts } \B(u,r)] &\leq \Pr[t^\star \mbox{ captures
    } u] \cdot 2\delta \lambda + 2e\delta\lambda/K
  \end{align*}
  Since there are $K$ possible terminals that can capture $u$, and
  exactly one captures $u$, it follows that
  \begin{align*}
    \Pr[\B(u,r) \mbox{ gets cut}] &\leq \bigg(
    \sum_{t^\star} \Pr[t^\star \mbox{ captures
    } u] \bigg) \cdot 2\delta \lambda  +
    K \cdot 2e\delta\lambda/K\\ 
    &\leq 2(1+e)\delta\lambda.
  \end{align*}
  Since $\delta = O(r/A_u)$, and $\lambda = \ln K$, the claim follows.
\end{proof}

Finally, to show that the padding property of Lemma~\ref{lem:pad}
implies the separation probability~(\ref{eq:1}) is standard: we give it
here for completeness. If $d(u,v) \geq A_u/4$, then $O(\log K) \cdot
\frac{d(u,v)}{A_u} \geq 1$ for a large enough constant in the big-Oh,
so~(\ref{eq:1}) is trivially satisfied. Else, $v \in \B(u,r^\star)$ for $r^\star =
d(u,v) \leq A_u/4$, and $\B(u,r^\star)$ not being cut implies that $u,v$ are not
separated; by Lemma~\ref{lem:pad} this happens with probability 
\[ O(\log K) \cdot \frac{r^\star}{A_u} = O(\log K) \cdot \frac{d(u,v)}{A_u}
\leq O(\log K) \cdot \frac{d(u,v)}{\min(A_u,A_v)}. \]
This completes the proof of Theorem~\ref{th:main}.

\section{An Algorithm for Low-Diameter Decompositions}
\label{sec:ldd}

We can get an algorithm for low-diameter decompositions (LDDs) using a
similar random rates idea. Recall that in the LDD problem, we are given a
metric $(V,d)$ and parameter $\Delta$, we want a random partition $V_1,
V_2, \ldots, V_q$ of the point set $V$ such that:
\begin{OneLiners}
\item[(i)] The clusters have diameter at most $\Delta$; i.e., $\max_i
  \max_{x,y \in V_i} d(x,y) \leq \Delta$, and
\item[(ii)] The probability 
  \begin{gather}
    \Pr[ x, y \text{ not in same cluster } ] \leq \beta \cdot \frac{
      d(x,y) }{\Delta}.
  \end{gather}
\end{OneLiners}

Recall that an $\eps$-net of a metric $(V,d)$ is a set $N \sse V$ such
that (a) for all $v \in V$, the distance to the nearest net point is at
most $\eps$ (i.e., $d(v,N) \leq \eps$), and (b) two net points are
$\eps$ apart (i.e., $d(t_1, t_2) \geq \eps$ for $t_1, t_2 \in N$ such
that $t_1 \neq t_2$). A greedy algorithm gives us such a net;
near-linear time algorithms are also known to find nets~\cite{HPM06}.

Our LDD procedure is the following simple reduction: 
\begin{quote}
  \textbf{Algorithm Random-Rates-LDD:} Let $T$ be a $\Delta/10$-net
  of $(V,d)$. Use a $2$-proximate terminal partitioning algorithm to
  define the clusters in the natural way: the vertices that map to the
  same terminal in $T$ are in the same cluster.
\end{quote}

\begin{lemma}
  \label{lem:ldd}
  A $2$-proximate terminal partitioning $f$ with stretch $\alpha$ gives
  us a $\Delta$-LDD with $\beta = O(\alpha)$.
\end{lemma}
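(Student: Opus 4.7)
The plan is to verify the two LDD requirements separately, using crucially the $\Delta/10$-net property of $T$. The diameter bound is the easy part: 2-proximity combined with $A_u \leq \Delta/10$ gives $d(u, f(u)) \leq 2A_u \leq \Delta/5$, so whenever $f(u) = f(v)$ the triangle inequality yields $d(u,v) \leq 2\Delta/5 \leq \Delta$.

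For the separation guarantee I would do a case analysis on the sizes of $d(u,v)$ and $\min(A_u, A_v)$ relative to a small constant fraction of $\Delta$, say $\Delta/80$. If $d(u,v) \geq \Delta/80$, the bound $\Pr[f(u) \neq f(v)] \leq 1$ is trivially $O(d(u,v)/\Delta)$. If $d(u,v) < \Delta/80$ but $\min(A_u, A_v) \geq \Delta/80$, the stretch hypothesis directly gives $\Pr[f(u) \neq f(v)] \leq \alpha \cdot d(u,v)/(\Delta/80) = O(\alpha) \cdot d(u,v)/\Delta$.

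The main obstacle is the remaining case $\min(A_u, A_v) < \Delta/80$, because the stretch inequality then has a denominator too small to produce a bound involving $\Delta$. The rescue is a deterministic consequence of 2-proximity combined with the $\Delta/10$-separation of net points: if $A_w < \Delta/40$ then $f(w)$ must equal the unique closest terminal $t_w$ to $w$. Indeed, 2-proximity forces $f(w) \in \B(w, 2A_w)$, and any competing terminal $t \in \B(w, 2A_w)$ would obey $d(t, t_w) \leq 3A_w < \Delta/10$, contradicting the net separation. Assuming WLOG $A_u < \Delta/80$, the triangle inequality $A_v \leq A_u + d(u,v) < \Delta/40$ places both $u$ and $v$ in this deterministic regime, so $f(u) = t_u$ and $f(v) = t_v$. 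A further triangle inequality bounds $d(t_u, t_v) \leq A_u + d(u,v) + A_v < \Delta/20 < \Delta/10$, again forcing $t_u = t_v$ by net separation, and hence $\Pr[f(u) \neq f(v)] = 0$ in this case. Taking the worst constant over the three cases yields $\beta = O(\alpha)$.
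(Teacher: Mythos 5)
Your proof is correct and follows essentially the same strategy as the paper's: verify the diameter bound via $2$-proximity plus the net's covering property, then split the separation analysis into the trivial case of large $d(u,v)$, the case where the stretch bound applies directly because $\min(A_u,A_v)$ is $\Omega(\Delta)$, and the case of small $\min(A_u,A_v)$ where net separation forces the assignment deterministically so that $u$ and $v$ end up with the same terminal. The constants ($\Delta/80$ vs.\ $\Delta/100$) differ slightly but the argument is the same.
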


\begin{proof}
  Consider $x,y$ such that $d(x,y) > \Delta$, we claim that $f(x) \neq
  f(y)$. Indeed, since we found a $\Delta/10$-net, the closest terminal
  to each node is at distance at most $\Delta/10$ from it. By the
  proximity property, each node is assigned to a terminal at distance at
  most $\Delta/5$ from it, and since $d(x,y) > \Delta$, we must have
  $f(x) \neq f(y)$ by the triangle inequality. Hence we have the
  low-diameter property.

  Now for the probability of separation for some pair $x,y$. For $x,y$
  which are ``far apart'', say, $d(x,y) > \Delta/100$, the probability
  that $x,y$ are separated is trivially at most $1$, which is at most
  $100 \cdot d(x,y)/\Delta$, so $\beta \geq 100$ suffices for them.

  So assume $d(x,y) \leq \Delta/100$. Let $t_x, t_y$ be the closest
  terminals to $x,y$ respectively, and so $A_x = d(x,t_x)$ and $A_y =
  d(y, t_y)$. There are two cases:
  \begin{itemize}
  \item Both $A_x, A_y \geq \Delta/100$. Then by~(\ref{eq:2}), we have the
    probability of $x,y$ separated (or equivalently $f(x) \neq f(y)$) is
    at most
    \[ \alpha \cdot \frac{d(x,y)}{\min(A_x,A_y)} \leq 100\,\alpha \cdot
    \frac{d(x,y)}{\Delta}. \]
  \item At least one of $A_x, A_y \leq \Delta/100$, say $A_x \leq
    A_y$. Then $A_y \leq A_x + d(x,y) \leq \Delta/100 + \Delta/100 =
    \Delta/50$.   Since we also have $d(t_x,t_y) \leq d(x,t_x) + d(x,y) +
    d(y,t_y) = A_x + A_y + d(x,y) \leq \Delta/25$. By the packing
    property of a $\Delta/10$-net, we know that if $t_x
    \neq t_y$ then $d(t_x, t_y) \geq \Delta/10$, which implies that $t_x
    = t_y$. 

    Moreover, consider any other terminal $t$ within $B(x,2A_x) \cup
    B(y,2A_y)$, then $d(t_x, t) \leq 3A_x$ or $d(t_x, t) \leq A_x +
    d(x,y) + 2A_y$. 
    In either case, this would mean $d(t_x, t) \leq 6\Delta/100$, and
    hence again $t_x = t$. In other words, the only terminal within
    distance $2A_x$ of $x$ (and within $2A_y$ of $y$) is $t_x = t_y$.
    Now by the proximity condition, $f(x) = f(y)$ with probability~$1$.
  \end{itemize}
  This shows that the LDD procedure above satisfies $\beta \leq 100 \alpha$.
\end{proof}

Since the size of the net is at most $n$, this implies $\beta = O(\log
n)$. Moreover, recall that a metric has {\em doubling dimension} $\dim$
if for all $u \in V$ and $r \geq 0$, any set of diameter $2r$ can be
covered by $2^{\dim}$ sets of diameter at most $r$. It is a standard
fact that for metrics of doubling dimension $\dim$, any net $T$ has the
property that for every $u \in V$, $|\B(u,2A_u) \cap T| \leq
2^{O(\dim)}$. Thus $K$ is $2^{O(\dim)}$, and we get an LDD with
parameter $\beta = O(\dim)$, matching known results~\cite{GKL03}. We
summarize these results below.

\begin{corollary}
  \label{cor:ldd}
  Algorithm Random-Rates-LDD, using the random map $f$ from
  Section~\ref{sec:algo-zeroex}, has parameter $\beta = O(\log
  n)$. Moreover, for metrics of constant doubling dimension, the
  parameter $\beta = O(1)$.
\end{corollary}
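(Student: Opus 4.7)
The plan is to simply compose Theorem~\ref{th:main} with Lemma~\ref{lem:ldd}, and then bound the quantity $K$ separately in the two regimes. By Theorem~\ref{th:main}, Algorithm Random-Rates, run on the $\Delta/10$-net $T$, produces a $2$-proximate terminal partitioning with stretch $\alpha = O(\log K)$, where $K$ is the uniform upper bound on $|T\cap \B(x,2A_x)|$ over $x \in V$ assumed by the algorithm. Lemma~\ref{lem:ldd} then upgrades this to a $\Delta$-LDD with $\beta = O(\alpha) = O(\log K)$, so the whole task reduces to estimating $K$ in each of the two settings.

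For the general $n$-point case I would just use the trivial bound $K \leq |T| \leq n$, which immediately yields $\alpha = O(\log n)$ and hence $\beta = O(\log n)$, matching the known lower bound for arbitrary metrics.

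For metrics of doubling dimension $\dim$, I would use a standard packing argument. Since $T$ is a $\Delta/10$-net, every point $x$ satisfies $A_x \leq \Delta/10$, and so $\B(x,2A_x)$ is contained in a set of diameter at most $2\cdot 2A_x \leq 2\Delta/5$. Iterating the doubling property a constant number of times, this set can be covered by $2^{O(\dim)}$ sets each of diameter strictly less than $\Delta/10$; each such small set contains at most one terminal, since distinct terminals are pairwise $\Delta/10$-apart by the net property. Thus $|T \cap \B(x,2A_x)| \leq 2^{O(\dim)}$, giving $K \leq 2^{O(\dim)}$, so $\alpha = O(\log K) = O(\dim)$ and therefore $\beta = O(\dim) = O(1)$ for constant doubling dimension.

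There is essentially no genuine obstacle: the substantive work has been done in Theorem~\ref{th:main} and Lemma~\ref{lem:ldd}, and what remains is packaging. The only mildly nontrivial step is the doubling-packing estimate for the second bullet, but this is a textbook calculation and the constants absorb into the $O(\cdot)$ without difficulty.
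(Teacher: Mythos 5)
Your proposal is correct and follows the paper's own route: apply Theorem~\ref{th:main} and Lemma~\ref{lem:ldd}, then bound $K$ by $|T| \leq n$ in general and by $2^{O(\dim)}$ via the standard net-packing argument in the doubling case. The only difference is that you spell out the packing estimate, which the paper cites as a ``standard fact.''
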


It is known that for LDDs on general metrics, $\beta = \Omega(\log n)$
is best possible, e.g., for large girth expanders (see,
e.g.,~\cite{Bar96}). The above reduction gives another proof that for
$O(1)$-proximate terminal partitionings, we cannot achieve $\alpha =
o(\log k)$.

\section{Properties of the Truncated Exponential Distribution}
\label{sec:trunc-exp}

Here are some properties of the truncated exponential that were useful
in our analysis.
\begin{proposition}
\label{prop:texpcond}
Let $\nu \sim \text{TExp}(\lambda,\gamma)$, and $a,b>0$ be such that $(a+b) \leq \gamma$. Suppose $\gamma > 1/\lambda$. Then
\begin{OneLiners}
\item[(a)] $Z(\lambda,\gamma) = (1-\exp(-\lambda \gamma))^{-1} \leq 2$.
\item[(b)] $\Pr[\nu \in (a,a+b)] \leq 2 \exp(-\lambda a)(1-\exp(-b\lambda)) \leq 2 b \lambda \exp(-\lambda a)$.
\item[(c)] $\Pr[\nu \leq (a+b) \mid \nu \geq a] = \frac{\exp(-\lambda a) - \exp(-\lambda(a+b))}{\exp(-\lambda a) - \exp(-\lambda \gamma)} \leq b\lambda\cdot \frac{\exp(-\lambda a)}{\exp(-\lambda a) - \exp(-\lambda \gamma)}$.
\end{OneLiners}
\end{proposition}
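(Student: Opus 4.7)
The plan is to verify each of (a), (b), (c) by a direct computation against the density $p(x) = Z(\lambda,\gamma)\,\lambda\,\exp(-\lambda x)$ on $[0,\gamma]$, where $Z(\lambda,\gamma) = (1-\exp(-\lambda\gamma))^{-1}$. All three parts are essentially bookkeeping: the hypothesis $\gamma > 1/\lambda$ is only needed for part~(a), and then (b) and (c) follow by integrating the density and applying the elementary estimate $1 - e^{-x} \leq x$ for $x \geq 0$.

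For part~(a), I would simply note that $\gamma > 1/\lambda$ implies $\lambda\gamma > 1$, so $\exp(-\lambda\gamma) < 1/e < 1/2$, hence $1 - \exp(-\lambda\gamma) > 1/2$, and inverting gives $Z(\lambda,\gamma) \leq 2$.

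For part~(b), I would integrate the density over $[a,a+b] \subseteq [0,\gamma]$ to obtain
\[
\Pr[\nu \in (a,a+b)] \;=\; Z(\lambda,\gamma)\,\bigl(\exp(-\lambda a) - \exp(-\lambda(a+b))\bigr) \;=\; Z(\lambda,\gamma)\,\exp(-\lambda a)\bigl(1 - \exp(-\lambda b)\bigr).
\]
Applying part~(a) to bound the $Z$ factor by $2$ yields the first inequality, and then $1 - e^{-\lambda b} \leq \lambda b$ gives the second.

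For part~(c), the conditional probability expands as $\Pr[a \leq \nu \leq a+b]\,/\,\Pr[\nu \geq a]$, and both the numerator and denominator are integrals of the same density $Z\lambda e^{-\lambda x}$ --- over $[a,a+b]$ and $[a,\gamma]$ respectively. The $Z$ normalizations cancel, leaving exactly $\frac{\exp(-\lambda a) - \exp(-\lambda(a+b))}{\exp(-\lambda a) - \exp(-\lambda\gamma)}$, and the final upper bound follows by again replacing $1 - e^{-\lambda b}$ in the numerator by $\lambda b$. There is no real obstacle here; the only point to be careful about is that in the conditional probability of (c), the conditioning event $\{\nu \geq a\}$ is computed with respect to the truncated measure on $[0,\gamma]$, so that the same $Z$ appears in both numerator and denominator and cancels cleanly --- no separate argument involving $Z$ or the hypothesis $\gamma > 1/\lambda$ is needed for (c).
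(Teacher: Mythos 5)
Your proposal is correct and essentially mirrors the paper's own proof: part~(a) via $\lambda\gamma>1$, part~(b) by integrating the density and using $1-e^{-y}\leq y$, and part~(c) by noting the normalization cancels in the ratio of integrals. The one small value-add you supply beyond the paper's terse ``similar calculations'' for~(c) is the explicit observation that the $Z$ factors cancel there, so~(c) does not depend on the $\gamma>1/\lambda$ hypothesis at all.
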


\begin{proof}
  Part~(a) follows from $\gamma \lambda > 1$ and hence $Z(\lambda,
  \gamma) = (1-\exp(-\lambda \gamma))^{-1} \leq (1 - \exp(1))^{-1} =
  \frac{e}{e-1} \leq 2$. For part~(b), we have
  \begin{align*}
    \Pr[\nu \in (a,a+b)] &= Z(\lambda, \gamma) \cdot \lambda \cdot \int_{x
      = a}^{a+b} e^{-\lambda x} \\
    &= Z(\lambda, \gamma) \cdot ( e^{-\lambda a} - e^{-\lambda (a+b)})
    \\
    &\leq 2 e^{-\lambda a}( 1
    - e^{- \lambda b}) \leq 2b \lambda e^{-\lambda a}.
  \end{align*}
  The last step uses part~(a), and that $1 + y \leq e^y$ for all $y\in \R$.
  For part~(c), we use similar calculations.
\end{proof}

\subsection*{Acknowledgments}

This work was done when A.~Gupta was visiting Microsoft Research SVC in
2006; he thanks them for their hospitality. We also thank T.-H.\ Hubert
Chan and Satish Rao for useful discussions.

{\small
\bibliographystyle{alpha}
\bibliography{zeroext}

\newcommand{\etalchar}[1]{$^{#1}$}
\begin{thebibliography}{BGK{\etalchar{+}}13}

\bibitem[ABN06]{ABN06}
Ittai Abraham, Yair Bartal, and Ofer Neiman.
\newblock Advances in metric embedding theory.
\newblock In {\em Proceedings of the 38th ACM Symposium on the Theory of
  Computing (STOC)}, pages 271--286, 2006.

\bibitem[Awe85]{Awer85}
Baruch Awerbuch.
\newblock Complexity of network synchronization.
\newblock {\em J. Assoc. Comput. Mach.}, 32(4):804--823, 1985.

\bibitem[Bar96]{Bar96}
Yair Bartal.
\newblock Probabilistic approximations of metric spaces and its algorithmic
  applications.
\newblock In {\em Proceedings of the 37th Symposium on the Foundations of
  Computer Science (FOCS)}, pages 184--193, 1996.

\bibitem[BGK{\etalchar{+}}13]{BGKMPT11-tocs}
Guy~E. Blelloch, Anupam Gupta, Ioannis Koutis, Gary~L. Miller, Richard Peng,
  and Kanat Tangwongsan.
\newblock Nearly-linear work parallel sdd solvers, low-diameter decomposition,
  and low-stretch subgraphs.
\newblock {\em Theory of Computing Systems}, pages 1--34, 2013.

\bibitem[BNS13]{BNS13}
Niv Buchbinder, Joseph Naor, and Roy Schwartz.
\newblock Simplex partitioning via exponential clocks and the multiway cut
  problem.
\newblock In {\em STOC}, pages 535--544, 2013.

\bibitem[CKR05]{CKR01-zero}
Gruia Calinescu, Howard Karloff, and Yuval Rabani.
\newblock Approximation algorithms for the 0-extension problem.
\newblock {\em SIAM J. Comput.}, 34(2):358--372, 2004/05.

\bibitem[FHRT03]{FHRT03}
Jittat Fakcharoenphol, Chris Harrelson, Satish Rao, and Kunal Talwar.
\newblock {An improved approximation algorithm for the 0-extension problem}.
\newblock In {\em ACM-SIAM Symposium on Discrete Algorithms}, pages 257--265,
  2003.

\bibitem[FRT04]{FRT03}
Jittat Fakcharoenphol, Satish Rao, and Kunal Talwar.
\newblock A tight bound on approximating arbitrary metrics by tree metrics.
\newblock {\em J. Comput. System Sci.}, 69(3):485--497, 2004.

\bibitem[GHYZ11]{GHYZ11}
Dongdong Ge, Simai He, Yinyu Ye, and Jiawei Zhang.
\newblock Geometric rounding: a dependent randomized rounding scheme.
\newblock {\em J. Comb. Optim.}, 22(4):699--725, November 2011.

\bibitem[GKL03]{GKL03}
Anupam Gupta, Robert Krauthgamer, and James~R.\ Lee.
\newblock Bounded geometries, fractals, and low--distortion embeddings.
\newblock In {\em Symposium on the Foundations of Computer Science (FOCS)},
  pages 534--543, 2003.

\bibitem[GVY96]{GVY96}
Naveen Garg, Vijay~V. Vazirani, and Mihalis Yannakakis.
\newblock Approximate max-flow min-(multi)cut theorems and their applications.
\newblock {\em SIAM J. Comput.}, 25(2):235--251, 1996.

\bibitem[HPM06]{HPM06}
Sariel Har-Peled and Manor Mendel.
\newblock Fast construction of nets in low-dimensional metrics and their
  applications.
\newblock {\em SIAM J. Comput.}, 35(5):1148--1184, 2006.

\bibitem[LR99]{LR88}
F.~Thomson Leighton and Satish~B. Rao.
\newblock Multicommodity max-flow min-cut theorems and their use in designing
  approximation algorithms.
\newblock {\em Journal of the ACM}, 46(6):787--832, 1999.
\newblock (Preliminary version in {\em 29th FOCS}, pages 422--431, 1988).

\bibitem[LS93]{LinialSaks93}
Nathan Linial and Michael Saks.
\newblock Low diameter graph decompositions.
\newblock {\em Combinatorica}, 13(4):441--454, 1993.
\newblock (Preliminary version in {\em 2nd SODA}, 1991).

\bibitem[MPX13]{MPX13}
Gary~L. Miller, Richard Peng, and Shen~Chen Xu.
\newblock Parallel graph decompositions using random shifts.
\newblock {\em CoRR}, abs/1307.3692, 2013.
\newblock (To appear in SPAA 2013.).

\bibitem[Sey93]{Sey93}
Paul~D. Seymour.
\newblock Packing directed circuits fractionally.
\newblock {\em Combinatorica}, 15(2):182--188, 1993.

\end{thebibliography}
}

\end{document}